\documentclass[twocolumn,showpacs,notitlepage,nofootinbib,floatfix,superscriptaddress]{revtex4-1} 

\usepackage{graphicx}
\usepackage{amssymb}
\usepackage{amsmath}
\usepackage{amsthm}
\usepackage{amsfonts}
\usepackage{mathrsfs}
\usepackage{color}
\usepackage{hyperref}
\usepackage{amscd}

\DeclareMathOperator{\im}{im}

\newcommand{\pr}[1]{\mathrm{pr}(#1)}
\newcommand{\suc}{\mathrm{succ}}
\newcommand{\face}[2]{{\Delta_{#1}(#2)}}
\DeclareMathOperator{\const}{\mathrm{const}}

\newcommand\pone{1.9\%}
\newcommand\ptwo{27.6\%}

\theoremstyle{lemma}
\newtheorem{lemma}{Lemma}

%%%%%%%%%%%%%%%%%%%%%%%%%%%%%%%%%%%%%%%%%%%%
\begin{document}
%%%%%%%%%%%%%%%%%%%%%%%%%%%%%%%%%%%%%%%%%%%%

\title{Three-dimensional color code thresholds via statistical-mechanical mapping}

\author{Aleksander Kubica}
\affiliation{Institute for Quantum Information \& Matter, California Institute of Technology,  Pasadena, CA 91125, USA}
\author{Michael E. Beverland}
\affiliation{Institute for Quantum Information \& Matter, California Institute of Technology,  Pasadena, CA 91125, USA}
\affiliation{Station Q Quantum Architectures and Computation Group, Microsoft Research, Redmond, WA 98052, USA}
\author{Fernando Brand\~ao}
\affiliation{Station Q Quantum Architectures and Computation Group, Microsoft Research, Redmond, WA 98052, USA}
\affiliation{Institute for Quantum Information \& Matter, California Institute of Technology,  Pasadena, CA 91125, USA}
\author{John Preskill}
\affiliation{Institute for Quantum Information \& Matter, California Institute of Technology,  Pasadena, CA 91125, USA}
\author{Krysta M. Svore}
\affiliation{Station Q Quantum Architectures and Computation Group, Microsoft Research, Redmond, WA 98052, USA}

\date{\today}

\begin{abstract}
Three-dimensional (3D) color codes have advantages for fault-tolerant quantum computing, such as protected quantum gates with relatively low overhead and robustness against imperfect measurement of error syndromes. 
Here we investigate the storage threshold error rates for bit-flip and phase-flip noise in the 3D color code on the body-centererd cubic lattice, assuming perfect syndrome measurements.
In particular, by exploiting a connection between error correction and statistical mechanics, we estimate the threshold for 1D string-like and 2D sheet-like logical operators to be $p^{(1)}_\mathrm{3DCC} \simeq \pone$ and $p^{(2)}_\mathrm{3DCC} \simeq \ptwo$.
We obtain these results by using parallel tempering Monte Carlo simulations to study the disorder-temperature phase diagrams of two new 3D statistical-mechanical models: the 4- and 6-body random coupling Ising models.
\end{abstract}

\pacs{03.67.Pp, 03.67.Lx, 11.15.Ha, 75.40.Mg, 75.50.Lk}
\maketitle

%%%%%%%%%%%%%%%%%%%%%%%%%%%%%%%%%%%%%%%%%%%%
%\section{Introduction}
%%%%%%%%%%%%%%%%%%%%%%%%%%%%%%%%%%%%%%%%%%%%
Some approaches to building scalable quantum computers are more practical than others due to their more favorable noise and resource requirements.   
The two-dimensional (2D) surface code approach \cite{Kitaev2003, Bravyi1998, Dennis2002} has very desirable features: (1) geometrically local syndrome measurements, (2) a high accuracy threshold and (3) fault-tolerant Clifford gates with low overhead.
Unfortunately, the surface code is {\it not} known to admit a (4) fault-tolerant non-Clifford gate with low overhead. 
The formidable qubit overhead cost of state distillation \cite{Bravyi2005, Fowler2012} for the necessary non-Clifford gate motivates the quest for alternatives to the surface code with all features (1)--(4).

Such alternatives may be sought in the general class of topological codes \cite{Kitaev2003, Bravyi1998, Levin2005, Bombin2006, Bombin2013book}, which includes the surface code as a special case. 
By definition, topological codes require only geometrically local syndrome measurements and tend to have high accuracy thresholds.
Topological codes often admit some fault-tolerant transversal gates (implemented by the tensor product of single-qubit unitaries), which have low overhead cost.
However, no quantum error-detecting code (whether topological or not), has a universal transversal encoded gate set \cite{Zeng2011, Eastin2009}.

Here we focus on the 3D topological color codes \cite{Bombin2007, Bombin2013} closely related to the 3D toric code \cite{Kubica2015}, which come in two types.
The stabilizer type has 1D string-like $Z$ and 2D sheet-like $X$ logical operators, and a logical non-Clifford gate $T = \text{diag}(1,e^{i \pi /4})$ is transversal.
In the subsystem type, there are 1D string-like $X$ and $Z$ dressed logical operators, and all logical Clifford gates are transversal.
Moreover, in the subsystem color code it is possible to reliably detect measurement errors in a single time step \cite{Bombin2015, Brown2015}.
By fault-tolerantly switching between the stabilizer and subsystem color codes \cite{Bombin2013,Kubica2015a}, one can combine the desirable features (1), (3) and (4).

In this work, we address feature (2) for the 3D color codes by finding thresholds $p^{(1)}_\mathrm{3DCC} \simeq \pone$ and $p^{(2)}_\mathrm{3DCC} \simeq \ptwo$ for phase-flip $Z$ and bit-flip $X$ noise, respectively.
Our results assume optimal decoders for independent $X$ and $Z$ noise with perfect measurements, and thereby give fundamental error-correction bounds against which efficient, but suboptimal decoders (such as that studied in \cite{Brown2015}) can be compared.
These thresholds are comparable to the analogous thresholds for the cubic lattice 3D toric code: $p^{(1)}_{\text{3DTC}} \simeq 3.3\%$ and $p^{(2)}_{\text{3DTC}} \simeq 23.5\%$ \cite{Ozeki1998,Hasenbusch2007,Ohno2004}, but 
compare unfavorably to $p_{\text{2DTC}} \simeq 10.9\%$ for the square lattice 2D toric code \cite{Honecker2001}.

\begin{figure}[b!]
\vspace*{-5pt}
\includegraphics[width=.95\columnwidth]{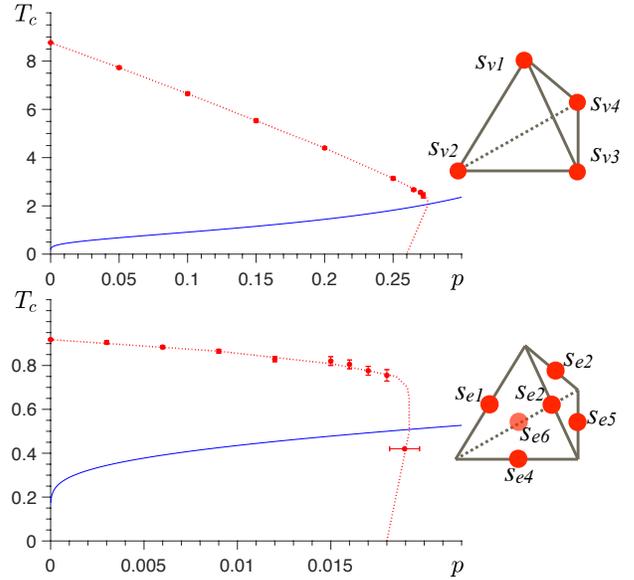}
\caption{The disorder-temperature $(p,T)$-phase diagrams of the $4$-body (top) and $6$-body (bottom) 3D random coupling Ising models.
Both models are defined on the 3D body-centered cubic lattice built of tetrahedra. The $4$- and $6$-body models have spins on vertices and edges, respectively. 
The error correction threshold $p_c$ can be found as the intersection of the Nishimori line (blue line) with the anticipated phase boundary (red dotted line).}
\label{fig_phasediag} 
\end{figure}

Our approach extends techniques known for other codes \cite{Dennis2002, Wang2002, Kovalev2013, Bombin2013book, Katzgraber2009, Bombin2012, Andrist2012} in order to relate the 3D color code thresholds to phase transitions in two new 3D statistical-mechanical models: the $4$- and $6$-body random coupling Ising models (RCIM).
We use large-scale parallel tempering Monte Carlo simulations \cite{Hukushima1996} and analyze specific heat, sublattice magnetization and Wilson loop operators to map the relevant parts of the disorder-temperature $(p,T)$-phase diagram; see Fig.~\ref{fig_phasediag}. 
The $6$-body RCIM is an example of a lattice gauge theory with a local (gauge) $\mathbb{Z}_2\times\mathbb{Z}_2$ symmetry, which makes this model both interesting and challenging to study.

%%%%%%%%%%%%%%%%%%%%%%%%%%%%%%%%%%%%%%%%%%%%
\emph{3D stabilizer color code.--- }
%%%%%%%%%%%%%%%%%%%%%%%%%%%%%%%%%%%%%%%%%%%%
Let $\mathcal{L}$ be a three-dimensional lattice built of tetrahedra such that its vertices are $4$-colorable, i.e., vertices connected by an edge are of different colors. 
An example of such a lattice is the body-centered cubic (bcc) lattice obtained from two interleaved cubic lattices; see Fig.~\ref{fig_bcc_lattices}(b). 
We denote by $\face{i}{\mathcal{L}}$ the set of all $i$-simplices of $\mathcal{L}$. 
Then, $0$-simplices of $\mathcal{L}$ are vertices, $1$-simplices are edges, etc. We place one qubit at every tetrahedron $t\in\face{3}{\mathcal{L}}$. For every vertex $v\in\face{0}{\mathcal{L}}$ and edge $e\in\face{1}{\mathcal{L}}$ we define operators $S_X(v)$ and $S_Z(e)$ to be the product of either Pauli $X$ or $Z$ operators on qubits identified with tetrahedra in the neighborhood of the vertex $v$ or edge $e$, namely
\begin{equation}
S_X(v) = \prod_{\substack{t\in\face{3}{\mathcal{L}}\\ t\supset v}} X(t),\quad
S_Z(e) = \prod_{\substack{t\in\face{3}{\mathcal{L}}\\ t\supset e}} Z(t).
\label{eq_generators}
\end{equation}
The 3D stabilizer \cite{subsystem}
color code is defined by specifying its stabilizer group \cite{Gottesman1996}
\begin{equation}
\mathcal{S} = \langle S_X(v), S_Z(e) | v\in\face{0}{\mathcal{L}}, e\in\face{1}{\mathcal{L}} \rangle. 
\end{equation}
Using the colorability condition one can show that $\mathcal{S}$ is an Abelian subgroup of the Pauli group not containing $-I$. The code space is the $+1$ eigenspace of all elements of $\mathcal{S}$ and the lowest-weight logical $X$ and $Z$ operators of the 3D color code are 2D sheet-like and 1D string-like objects; see Fig.~\ref{fig_bcc_lattices}(a).
In general, the color code can be defined in $d\geq 2$ dimensions on a lattice, provided it is a $(d+1)$-colorable simplical $d$-complex \cite{Kubica2015a}.

\begin{figure}[ht!]
\includegraphics[width=.94\columnwidth]{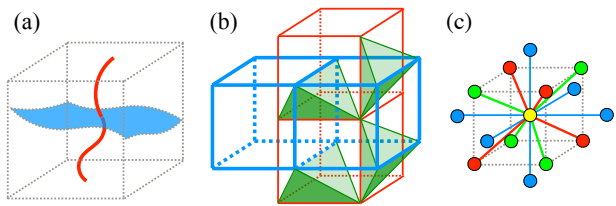}
\caption{(a) The 3D stabilizer color code has both 1D string-like (red) and 2D sheet-like (blue) logical operators. (b) The bcc lattice can be constructed starting from two interleaved cubic lattices (red and blue) and filling in tetrahedra (green). Not all tetrahedra are depicted. (c) The neighborhood of any vertex in the bcc lattice looks the same --- every vertex belongs to 24 edges, 36 triangular faces and 24 tetrahedra. The bcc lattice is $4$-colorable, i.e., every vertex is colored in red, green, blue or yellow, and no two neighboring vertices are of the same color.}
\label{fig_bcc_lattices} 
\end{figure}

%%%%%%%%%%%%%%%%%%%%%%%%%%%%%%%%%%%%%%%%%%%%
\emph{Error correction in CSS codes.--- }
%%%%%%%%%%%%%%%%%%%%%%%%%%%%%%%%%%%%%%%%%%%%
Since the color code is a CSS code \cite{Calderbank1997}, we choose to separately correct $X$- and $Z$-type errors, which simplifies the discussion.
We also assume perfect measurements.
For concreteness, we focus on $X$-error correction; $Z$-errors can be analyzed analogously \cite{xzreplacement}.

The set of all $Z$-type stabilizers which return $-1$ measurement outcomes is called a $Z$-type syndrome. Note that any nontrivial $Z$-syndrome signals the presence of some $X$-errors in the system. Correction of $X$-errors in a CSS code can be succinctly described by introducing a chain complex \cite{Freedman2001, Delfosse2014}
\begin{equation} 
\begin{array}{ccccc}
C_2 & \xrightarrow{\partial_2} & C_1 & \xrightarrow{\partial_1} & C_0\\
X\textrm{-stabilizers} & & \mathrm{qubits} & & Z\textrm{-stabilizers}
\end{array}
\label{eq_chain}
\end{equation}
where $C_2$, $C_1$ and $C_0$ are vector spaces over $\mathbb{Z}_2$ with bases $\mathcal{B}_2$ = $X$-stabilizer generators, $\mathcal{B}_1$ = physical qubits and $\mathcal{B}_0$ = $Z$-stabilizer generators, respectively.
The linear maps $\partial_2$ and $\partial_1$, called boundary operators, are chosen in such a way that the support of any $X$-stabilizer $\omega\in C_2$ is given by $\partial_2 \omega$, and the $Z$-syndrome corresponding to any $X$-error $\epsilon\in C_1$ can be found as $\partial_1 \epsilon$.
Note that $\partial_1\circ\partial_2 = 0$, since any $X$-stabilizer has trivial $Z$-syndrome.
One can think of the boundary operators as parity-check matrices $H^T_X$ and $H_Z$ of the CSS code.
In the case of the 3D color code, $C_2$, $C_1$, $C_0$ are generated by vertices, tetrahedra, and edges respectively, i.e., $\mathcal{B}_2 =\face{0}{\mathcal{L}}$, $\mathcal{B}_1 =\face{3}{\mathcal{L}}$ and $\mathcal{B}_0 =\face{1}{\mathcal{L}}$. 
The boundary operators are defined to be
$\partial_2 v = \sum_{\face{3}{\mathcal{L}}\ni t\supset v} t$ and $\partial_1 t = \sum_{\face{1}{\mathcal{L}}\ni e\subset t} e$ for any $v\in\face{0}{\mathcal{L}}$ and $t\in\face{3}{\mathcal{L}}$.

Let $\epsilon, \varphi\in C_1$ be two $X$-errors with the same $Z$-syndrome, $\partial_1 \epsilon = \partial_1 \varphi$. We say that $\epsilon$ and $\varphi$ are equivalent iff they differ by some $X$-stabilizer $\omega\in C_2$, namely $\epsilon+\varphi = \partial_2 \omega$.
To correct errors, we need a decoder --- an algorithm which takes the $Z$-syndrome $\sigma\in C_0$ as an input and returns a $Z$-correction $\varphi$ which will restore all $X$-stabilizers to have $+1$ outcomes, i.e., $\partial_1 \varphi = \sigma$. The decoder succeeds iff the actual error $\epsilon$ and the correction $\varphi$ are equivalent. An optimal decoder finds a representative $\varphi$ of the most probable equivalence class of errors $\overline\varphi = \{ \varphi + \partial_2 \omega | \forall\omega\in C_2\}$.

%%%%%%%%%%%%%%%%%%%%%%%%%%%%%%%%%%%%%%%%%%%%
\emph{Statistical-mechanical models.--- }
%%%%%%%%%%%%%%%%%%%%%%%%%%%%%%%%%%%%%%%%%%%%
In this section, we provide a brief derivation of the connection between optimal error-correction thresholds and phase transitions \cite{Dennis2002, Wang2002, Kovalev2013, Bombin2013book,Katzgraber2009, Bombin2012, Andrist2012}. 
In particular, we derive two new statistical-mechanical models relevant for the 3D color code.

We assume bit-flip noise, i.e., every qubit is independently affected by Pauli $X$ error with probability $p$. The probability of an $X$-error $\epsilon\in C_1$ affecting the system is
\begin{equation}
\pr{\epsilon}= \prod_{j \in \mathcal{B}_1} p^{[\epsilon]_j} (1-p)^{1-[\epsilon]_j} \propto \left(\frac{p}{1-p}\right )^{ \sum_{j \in \mathcal{B}_1} [\epsilon]_j},
\label{eq_prerror}
\end{equation}
where $[\epsilon]_j \in \mathbb{Z}_2$ denotes the $j$ coefficient of $\epsilon$ in the $\mathcal{B}_1$ basis, $\epsilon = \sum_{j\in\mathcal{B}_1} [\epsilon]_j j$.

For a general CSS code family with the chain complex in Eq.~(\ref{eq_chain}), the $X$-error correction threshold is the largest $p_c$ such that for all $p<p_c$ the probability of successful decoding goes to 1 in the limit of infinite system size
\begin{equation}
\pr\suc = \sum_{\epsilon\in C_1} \pr{\epsilon} \pr{\suc | \epsilon} \rightarrow 1.
\end{equation}
With the optimal decoder, the conditional probability $\pr{\suc | \epsilon}$ equals 1 if $\epsilon$ belongs to the most probable error equivalence class consistent with the syndrome $\partial_1 \epsilon$, and 0 otherwise. The probability of equivalence class $\overline{\epsilon}$ is
\begin{equation}
\pr{\overline\epsilon} =  \sum_{\omega\in C_2}\pr{\epsilon + \partial_2\omega} \propto  \sum_{\omega\in C_2} e^{-2\beta(p) \sum_{j \in \mathcal{B}_1} [\epsilon + \partial_2\omega]_j},
\label{eq_partition_deriv}
\end{equation}
where we use Eq.~(\ref{eq_prerror}) and introduce
\begin{equation}
\beta(p) = -\frac{1}{2}\log \frac{p}{1-p}.
\label{eq_nishimori}
\end{equation}
To rewrite Eq.~(\ref{eq_partition_deriv}), we use 
$[\partial_2 \omega ]_j \equiv \sum_{i\in\mathcal{B}_2 \wedge \partial_2 i \ni j} [\omega]_i \mod 2$
and $1-2[\epsilon + \partial_2\omega]_j= (-1)^{[\epsilon]_j}(-1)^{[\partial_2\omega]_j} = (-1)^{[\epsilon]_j}\prod_{i\in\mathcal{B}_2 \wedge \partial_2 i \ni j}(-1)^{[\omega]_i}$.
By introducing new (classical spin) variables $s_i = (-1)^{[\omega]_i}$ for all $i\in\mathcal{B}_2$, we can replace the sum over $\omega \in C_2$ in Eq.~(\ref{eq_partition_deriv}) by a sum over different configurations $\{s_i = \pm 1 \}$, yielding
\begin{equation}
\pr{\overline\epsilon} \propto \sum_{\{s_i = \pm 1\}} e^{-\beta(p) H_\epsilon (\{ s_i\})}
\label{eq_partition},
\end{equation}
where we introduce the Hamiltonian
\begin{equation}
H_\epsilon (\{ s_i\}) = -\sum_{j \in \mathcal{B}_1}  (-1)^{[\epsilon]_j} \prod_{\substack{i\in\mathcal{B}_2\\ [\partial_2 i]_j = 1}} s_i.
\label{eq_rcim}
\end{equation}

We define the random coupling Ising model (RCIM) to be a classical spin $s_i = \pm 1$ random model with quenched couplings $(-1)^{[\epsilon]_j}$ described by $H_\epsilon (\{ s_i\})$ in Eq.~(\ref{eq_rcim}).
The RCIM has two independent parameters: disorder strength $p$ (i.e., the probability of negative couplings) and inverse temperature $\beta$. 
The partition function of the RCIM with disorder $\epsilon$ at temperature $\beta^{-1}$ is given by
\begin{equation}
Z_\epsilon (\beta) = \sum_{\{s_i = \pm 1\}} e^{-\beta H_\epsilon (\{ s_i\})}.
\label{eq_partition}
\end{equation}
Note that for the proportionality $\pr{\overline\epsilon} \propto Z_\epsilon (\beta)$ in Eq.~(\ref{eq_partition}) to hold one requires $\beta = \beta(p)$.

For the 3D color code, Eq.~(\ref{eq_rcim}) leads to the following two new statistical-mechanical models
\begin{eqnarray}
H^{X}_{\epsilon} (\{s_v\}) &=& - \sum_{t\in\face{3}{\mathcal{L}}} (-1)^{[\epsilon]_t} \raisebox{-10pt}{\includegraphics[height = 25pt]{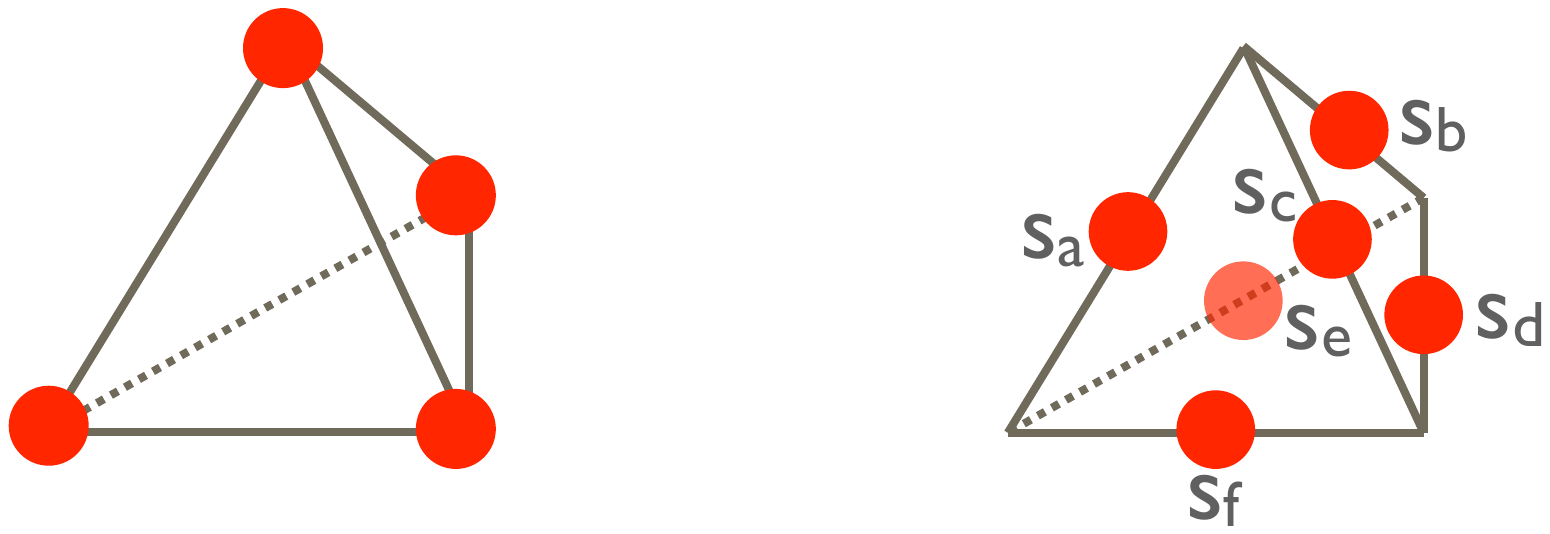}},
\label{eq_ham_4body}\\
H^Z_{\epsilon} (\{s_e\}) &=& - \sum_{t\in\face{3}{\mathcal{L}}} (-1)^{[\epsilon]_t} \raisebox{-10pt}{\includegraphics[height = 25pt]{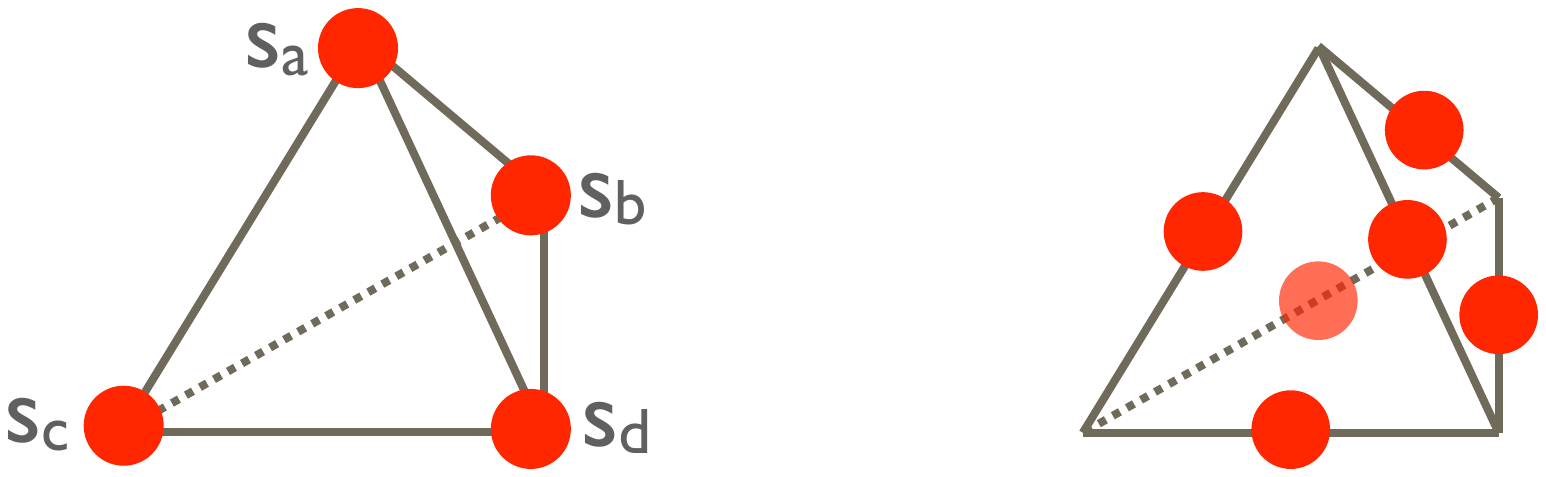}},
\label{eq_ham_6body}
\end{eqnarray}
relevant to correction of $X$- and $Z$-errors, respectively.
Note that $H^{X}_{\epsilon} (\{s_v\})$ (respectively $H^{Z}_{\epsilon} (\{s_e\})$) contains 4-body (6-body) terms, which are products of vertex (edge) spins of every tetrahedron.
We observe that for $p=0$, i.e., the case with no disorder, these two models are self-dual in the sense that the low-temperature expansion of each model matches the high-temperature expansion of the other \cite{Wegner1971}; see the Supplemental Material.

The Hamiltonian in Eq.~(\ref{eq_rcim}) determines a thermal ensemble of excitations in the statistical mechanical model. For $H^{X}_{\epsilon} (\{s_v\})$ the excitations are 2D domain walls residing on a set of tetrahedra $\varphi=\epsilon + \partial_2 \omega\in C_1$, where these walls terminate at the edges contained in $\partial_1\varphi=\partial_1 \epsilon \in C_0$. In the color code, this ensemble of domain walls corresponds to the ensemble of possible $X$-errors which generate the same error syndrome as  $\epsilon$, and the Boltzmann weight of a wall configuration coincides with the probability of the corresponding $X$-error configuration $\varphi$. Likewise, for  $H^{Z}_{\epsilon} (\{s_e\})$ the excitations are 1D strings terminating at vertices in $\partial_1\epsilon$, corresponding to $Z$-errors which generate the same error syndrome as $\epsilon$.

To determine the storage threshold for the 3D color code, we investigate the disorder-temperature $(p,T)$-phase diagram of the RCIM in Eq.~(\ref{eq_rcim}). In the ordered phase, large fluctuations of domain walls (or strings) are suppressed \cite{Dennis2002}, and the free energy cost
\begin{equation}
\Delta_\lambda(\epsilon) = -\log Z_{\epsilon+\lambda}(\beta) +\log Z_{\epsilon}(\beta)
\end{equation}
of introducing any non-trivial domain wall $\lambda\in\ker\partial_1 \setminus \im\partial_2$ to the system at inverse temperature $\beta$ with disorder $\epsilon$ should diverge in the limit of infinite system size when averaged over all disorder configurations
\begin{equation}
\langle \Delta_\lambda \rangle = \sum_{\epsilon\in C_1} \pr\epsilon \Delta_\lambda(\epsilon) \rightarrow \infty.
\end{equation}
Correspondingly, in the color code, the error $\varphi$ produces a syndrome $\partial_1\varphi$ which points to a unique equivalence class $\overline\varphi$, so that the syndrome can be decoded successfully with high probability.
Indeed, we show in the Supplemental Material, $\pr\suc\rightarrow 1$ for the error rate $p$ implies $\langle \Delta_\lambda \rangle \rightarrow \infty$ for the RCIM at inverse temperature $\beta(p)$ and disorder strength $p$. Thus, by finding the critical point along the line defined by Eq.~(\ref{eq_nishimori}) (the Nishimori line \cite{Nishimori1981}) we obtain the threshold value $p_c$.

%%%%%%%%%%%%%%%%%%%%%%%%%%%%%%%%%%%%%%%%%%%% 
\emph{Phase diagram.---}
%%%%%%%%%%%%%%%%%%%%%%%%%%%%%%%%%%%%%%%%%%%%
We describe how to map out the $(p,T)$-phase diagrams of the two RCIMs, $H^{X}_{\epsilon} (\{s_v\})$ and $H^{Z}_{\epsilon} (\{s_e\})$.
The discontinuity in energy density across a first order phase transition allows for straightforward identification of the phase boundary in the regime of low disorder.
However, more reliable order parameters are required to probe a (higher-order) phase transition close to the critical point on the Nishimori line.
Moreover, an appropriate order parameter takes symmetries of the model into account. Note that flipping a subset of spins $\{s_i\}_{i\in I}$, i.e., $s_i \mapsto -s_i$ for $i\in I$, is a symmetry if it leaves the Hamiltonian describing the model invariant.

\begin{figure*}
\includegraphics[width=\textwidth]{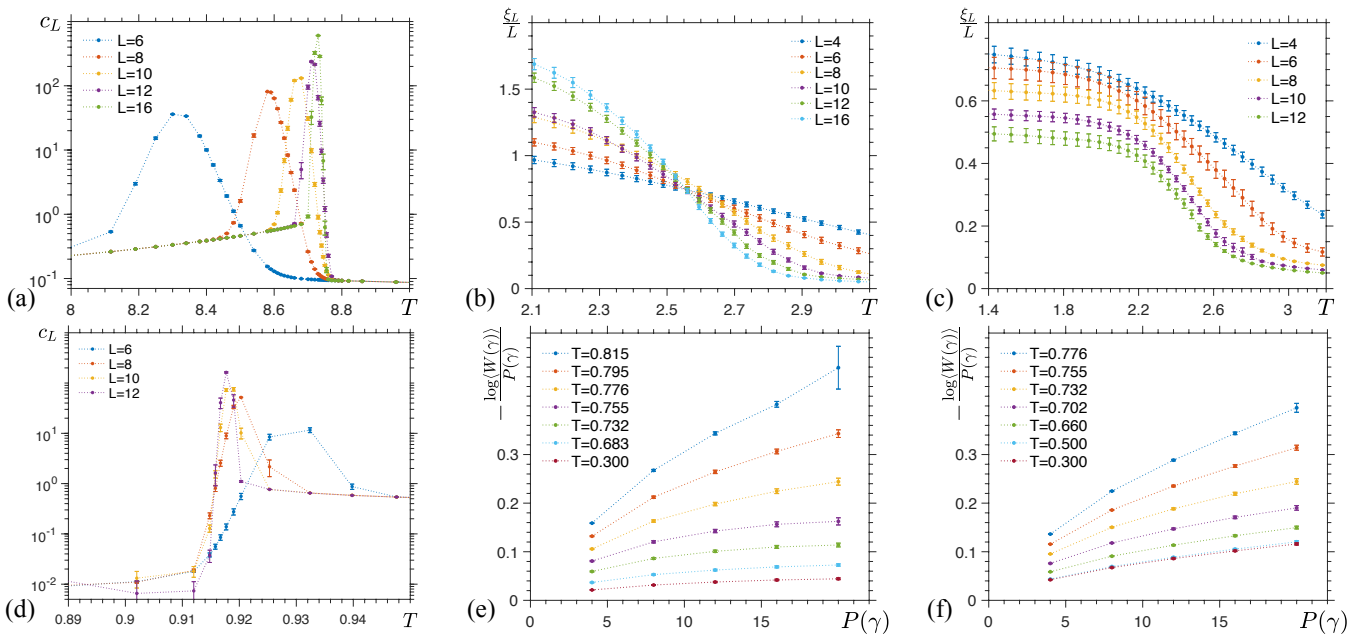}
\caption{
Results for the 3D $4$-body (a)-(c) and $6$-body (d)-(f) RCIM.
By finding the peak positions of specific heat $c_L$ for different system sizes $L$ and exploiting finite-size scaling we estimate for $p=0$ the critical temperature of a phase transition in (a) and (d) to be $T_c = 8.77(2)$ and $T_c=0.918(3)$.
(b) For $p=0.27$ we identify $T_c=2.56(4)$ as the intersection of normalized spin-spin correlation functions $\xi_L /L$ for different system sizes $L$. (c) For $p=0.28$ there is no indication of a phase transition.
In (e) and (f) we check if the Wilson loop operator $W(\gamma)$ satisfies the perimeter law by plotting $-\log \langle W(\gamma)\rangle /P(\gamma)$ as a function of perimeter $P(\gamma)$ of the square loop $\gamma$ for different temperatures $T$.
(e) For $p=0.018$ we see a change of scaling as the system undergoes a phase transition at $T = 0.75(3)$.
(f) For $p=0.021$ there is no indication of a phase transition.}
\label{fig_numerics} 
\end{figure*}

The $4$-body RCIM in Eq.~(\ref{eq_ham_4body}) has a global $\mathbb{Z}_2\times\mathbb{Z}_2\times\mathbb{Z}_2$ symmetry.
An example of a symmetry operation is a simultaneous flip of vertex spins on all red and blue vertices, since it leaves every term of $H^{X}_{\epsilon} (\{s_v\})$ unchanged.
Due to this symmetry, the total magnetization is not a good order parameter; however the sublattice magnetization of spins of a single color is. 
Instead of using the sublattice magnetization directly, more precise estimations are obtained by considering the finite-size scaling of the spin-spin correlation function \cite{Palassini1999}. 
Near the phase transition, for fixed disorder strength $p$ and temperatures $T$ close to the critical temperature $T_c(p)$, the correlation length $\xi_L$ is expected to scale as
\begin{equation}
\xi_L(p,T) /L \sim f(L^{1/\nu} (T-T_c(p))),
\label{eq_correlation}
\end{equation}
where $L$ is the linear system size, $f$ is a scaling function and $\nu$ is the correlation length critical exponent \cite{Goldenfeld1992}.
We can estimate $T_c(p)$ by plotting $\xi_L(p,T)/L$ as a function of temperature $T$ for different system sizes $L$ and finding their crossing point; see Fig.~\ref{fig_numerics}(a)(b).
If no crossing is observed, then we conclude that there is no phase transition.

The 6-body RCIM in Eq.~(\ref{eq_ham_6body}) describes a lattice gauge theory with a local $\mathbb{Z}_2\times\mathbb{Z}_2$ symmetry.
An example of a symmetry operation is a flip of edge spins on edges from a single yellow vertex to all neighboring red and blue vertices; see Fig.~\ref{fig_bcc_lattices}(c).
Due to Elitzur's theorem \cite{Elitzur1975}, the gauge symmetry rules out existence of any local order parameter.
We define a Wilson loop operator \cite{Wilson1974, Kogut1979}
\begin{equation}
W(\gamma) = \prod_{e\in\gamma} s_e,
\label{eq_wilson}
\end{equation}
to be a product of edge spins along a loop $\gamma\subset\face{1}{\mathcal{L}}$.
For $W(\gamma)$ to be gauge-invariant the loop $\gamma$ can only be composed of edges connecting vertices of two (out of four possible) colors. 
The phase transition can be identified by analyzing scaling of the thermal expectation value of $W(\gamma)$ averaged over different disorder configurations
\begin{equation}
\langle W(\gamma) \rangle = \sum_{\epsilon\subset\face{3}{\mathcal{L}}} \pr\epsilon \sum_{\{s_e\}} W(\gamma)\frac{e^{-\beta H^Z_{\epsilon}(\{s_e\})}}{Z_\epsilon(\beta)}.
\end{equation}
Namely, in the limit of large square loops \cite{Creutz1979, Wang2002, Ohno2004}, $-\log \langle W(\gamma) \rangle$ scales linearly with the loop's perimeter $P(\gamma)$ in the ordered (Higgs) phase, whereas in the disordered (confinement) phase it scales linearly with the minimum area $A(\gamma)$ enclosed by $\gamma$; see Fig.~\ref{fig_numerics}(d)-(f).

We find the $(p,T)$-phase diagrams of the $4$- and $6$-body RCIMs by performing Monte Carlo simulations with parallel tempering technique \cite{Hukushima1996}; see Fig.~\ref{fig_phasediag}.
We test equilibration of the system by a logarithmic binning of the data.
Since we can simulate only finite-size systems, a careful analysis of finite-size effects is necessary.
Parameters of numerical simulations and additional details are provided in the Supplemental Material.

%%%%%%%%%%%%%%%%%%%%%%%%%%%%%%%%%%%%%%%%%%%% 
\emph{Discussion.--- }
%%%%%%%%%%%%%%%%%%%%%%%%%%%%%%%%%%%%%%%%%%%%
We have found 3D stabilizer color code thresholds for phase-flip $Z$ and bit-flip $X$ noise models with optimal decoding and perfect measurements: $p^{(1)}_\mathrm{3DCC} \simeq \pone$ and $p^{(2)}_\mathrm{3DCC}  \simeq \ptwo$.
The $X$-stabilizers detecting $Z$-errors are the same for the 3D stabilizer and subsystem color codes. 
Since the subsystem code is symmetric under the exchange of $X$- and $Z$-generators, its phase- and bit-flip thresholds are the same and equal to $p^{(1)}_\mathrm{3DCC}$ of the stabilizer color code on the same lattice family.
The 3D color code threshold \cite{comparison} with the (efficient) clustering decoder $p^{(1)}_{\text{clust}} \simeq 0.46 \%$ \cite{Brown2015} is about a fourth of $p^{(1)}_\mathrm{3DCC}$, justifying a search for efficient color-code decoders with performance as close to optimal as for 2D surface and color codes.

We hope that our work initiates and motivates a careful study of the 3D random coupling Ising models and their phase diagrams.
We conjecture the existence of a spin-glass phase \cite{Binder1986} in the $6$-body RCIM, corresponding to a regime with intermediate noise strength in which memory lifetime with non-optimal decoders is polynomial rather than exponential in the system size.

A future extension of this work might incorporate measurement errors in the analysis. 
This would require the study of 4D random coupling models and thus use more computational resources.
If successful, this research program could provide a deeper understanding of single-shot error correction \cite{Bombin2015, Brown2015} from the standpoint of statistical mechanics.

We thank R. Andrist, H. Bomb\'in, N. Delfosse, L. Pryadko, B. Yoshida and I. Zintchenko for helpful discussions. AK would like to thank the QuArC group for their hospitality during a summer internship. We acknowledge funding provided by the Institute for Quantum Information and Matter, an NSF Physics Frontiers Center (NFS Grant PHY-1125565) with support of the Gordon and Betty Moore Foundation (GBMF-12500028).

%%%%%%%%%%%%%%%%%%%%%%%%%%%%%%%%%%%%%%%%%%%% 
\appendix
\onecolumngrid
%%%%%%%%%%%%%%%%%%%%%%%%%%%%%%%%%%%%%%%%%%%% 

%%%%%%%%%%%%%%%%%%%%%%%%%%%%%%%%%%%%%%%%%%%% 
\section{Duality of models for zero disorder}
%%%%%%%%%%%%%%%%%%%%%%%%%%%%%%%%%%%%%%%%%%%%

We already mentioned that the $4$- and $6$-body RCIM described by Eqs.~(\ref{eq_ham_4body})~and~(\ref{eq_ham_6body}) are dual for $p=0$, i.e., the case with no disorder . Here we say that two models are dual if the low-temperature expansion of the partition function of one model matches the high-temperature expansion of the partition function of the other and vice versa \cite{Wegner1971}. We observe that for any CSS code, the two statistical-mechanical models relevant for correction of $X$- and $Z$-errors are always dual for $p=0$. In particular, if there is only one phase transition in the first model at temperature $T_c^X$, then there is a unique phase transition in the dual model at temperature
\begin{equation}
T_c^Z = -\frac{2}{T_c^X} \log \tanh \frac{1}{T_c^X}.
\label{eq_dualtemp}
\end{equation}
This serves as a consistency check for our results. Indeed, for zero disorder $p=0$ the critical temperatures $T_c^X = 8.77(2)$ and $T_c^Z = 0.918(3)$ for the $4$- and $6$-body RCIM are related according to Eq.~(\ref{eq_dualtemp}) within the statistical uncertainty.

%%%%%%%%%%%%%%%%%%%%%%%%%%%%%%%%%%%%%%%%%%%% 
\section{Proof of implication}
%%%%%%%%%%%%%%%%%%%%%%%%%%%%%%%%%%%%%%%%%%%%

Here we show that successful decoding implies diverging average energy cost of introducing any non-trivial domain wall.
We used this fact in the derivation of statistical-mechanical models to relate the threshold $p_c$ of optimal error correction to the critical point $p_N$ on the Nishimori line.
Note that this implication allows us to only infer that $p_c \leq p_N$.
However, we expect that successful decoding be possible throughout the ordered phase and thus these two values should coincide.

\begin{lemma}
Consider a CSS code described by the chain complex in Eq.~(\ref{eq_chain}). Let $H_1 = \ker\partial_1 / \im\partial_2$ be the first homology group of finite cardinality, $|H_1| < \infty$.
If the probability of successful optimal $X$-error correction goes to 1 in the limit of infinite system size
\begin{equation}
\pr\suc = \sum_{\epsilon\in C_1} \pr{\epsilon} \pr{\suc | \epsilon} \rightarrow 1,
\end{equation}
then the average free energy cost of introducing any non-trivial domain wall $\lambda\in\ker \partial_1 \setminus \im \partial_2$ diverges
\begin{equation}
\langle \Delta_{\lambda} \rangle = \sum_{\epsilon\in C_1} \pr{\epsilon} \Delta_\lambda(\epsilon) \rightarrow \infty.
\end{equation}
\end{lemma}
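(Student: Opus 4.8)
The plan is to recast the disorder-averaged free-energy cost as an information-theoretic quantity attached to the Bayesian posterior over error equivalence classes, and then argue that a vanishing failure probability forces this quantity to diverge.

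First I would rewrite $\Delta_\lambda$. Since we work on the Nishimori line $\beta=\beta(p)$, the relation $\pr{\overline\epsilon}\propto Z_\epsilon(\beta)$ derived in the main text holds with an $\epsilon$-independent proportionality constant; moreover $Z_\epsilon(\beta)>0$ always, and $Z_\epsilon(\beta)=Z_{\epsilon'}(\beta)$ whenever $\overline\epsilon=\overline{\epsilon'}$ (both immediate, e.g.\ from $\pr{\overline\epsilon}\ge\pr\epsilon>0$ for $0<p<1$ and the fact that $\pr{\overline\epsilon}$ is manifestly a function of the class). Hence $\Delta_\lambda(\epsilon)=-\log Z_{\epsilon+\lambda}(\beta)+\log Z_\epsilon(\beta)=\log\big(\pr{\overline\epsilon}/\pr{\overline{\epsilon+\lambda}}\big)$, which depends on $\epsilon$ only through its class, because $\lambda\in\ker\partial_1$ makes $\overline{\epsilon+\lambda}$ a function of $\overline\epsilon$. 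Thus $\langle\Delta_\lambda\rangle=\sum_{\overline\epsilon}\pr{\overline\epsilon}\log\big(\pr{\overline\epsilon}/\pr{\overline{\epsilon+\lambda}}\big)$.

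Next I would organize this sum by syndrome $\sigma=\partial_1\epsilon$. Writing $P(\sigma)=\sum_{\partial_1\epsilon=\sigma}\pr{\overline\epsilon}$ and the posterior $P(\overline\epsilon\mid\sigma)=\pr{\overline\epsilon}/P(\sigma)$, the map $T_\lambda\colon\overline\epsilon\mapsto\overline{\epsilon+\lambda}$ is a fixed-point-free involution --- fixed-point-free precisely because $\lambda\notin\im\partial_2$ --- permuting the finitely many classes within each syndrome sector (this is where $|H_1|<\infty$ enters). Since $T_\lambda$ preserves the syndrome, $\pr{\overline\epsilon}/\pr{\overline{\epsilon+\lambda}}=P(\overline\epsilon\mid\sigma)/P(T_\lambda\overline\epsilon\mid\sigma)$, so $\langle\Delta_\lambda\rangle=\sum_\sigma P(\sigma)D_\sigma$ with $D_\sigma=\sum_{\overline\epsilon}P(\overline\epsilon\mid\sigma)\log\big(P(\overline\epsilon\mid\sigma)/P(T_\lambda\overline\epsilon\mid\sigma)\big)$, i.e.\ $D_\sigma$ is the relative entropy of $P(\cdot\mid\sigma)$ with respect to its pushforward under $T_\lambda$; in particular every $D_\sigma\ge0$. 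Grouping classes into $T_\lambda$-pairs writes $D_\sigma$ as a sum over pairs $\{c,c'\}$ of the non-negative terms $\big(P(c\mid\sigma)-P(c'\mid\sigma)\big)\log\big(P(c\mid\sigma)/P(c'\mid\sigma)\big)$. Let $f_\sigma=1-\max_c P(c\mid\sigma)$ be the conditional failure probability of the optimal decoder, so that $\pr{\mathrm{fail}}=\sum_\sigma P(\sigma)f_\sigma$. If $f_\sigma<1/2$ the maximum is attained at a unique class $c^\ast$; retaining only the pair containing $c^\ast$ and using $P(c^\ast\mid\sigma)=1-f_\sigma$, $0<P(T_\lambda c^\ast\mid\sigma)\le f_\sigma$, and the fact that $x\mapsto(a-x)\log(a/x)$ is decreasing on $(0,a)$, one gets $D_\sigma\ge\phi(f_\sigma)$ with $\phi(x)=(1-2x)\log\frac{1-x}{x}$; when $f_\sigma\ge1/2$ I simply use $D_\sigma\ge0$. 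Hence $\langle\Delta_\lambda\rangle\ge\sum_{\sigma:\,f_\sigma<1/2}P(\sigma)\,\phi(f_\sigma)$.

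The last step is to extract divergence. By hypothesis $\delta:=\pr{\mathrm{fail}}=\sum_\sigma P(\sigma)f_\sigma\to0$, so Markov's inequality gives $P(f_\sigma\ge\sqrt\delta)\le\sqrt\delta$; since $\phi$ is positive and decreasing on $(0,1/2)$, for $\delta$ small this yields $\langle\Delta_\lambda\rangle\ge(1-\sqrt\delta)\,\phi(\sqrt\delta)$, and $\phi(\sqrt\delta)=(1+o(1))\tfrac12\log(1/\delta)\to\infty$, proving the lemma. The one genuinely delicate point is exactly here: the hypothesis controls only the first moment $\sum_\sigma P(\sigma)f_\sigma$, whereas $\langle\Delta_\lambda\rangle$ is governed by $\sum_\sigma P(\sigma)\phi(f_\sigma)$ with $\phi$ logarithmically singular at $0$, so any a priori bounded surrogate for $\phi$ would only bound $\langle\Delta_\lambda\rangle$; the two-scale Markov estimate --- typical syndromes have $f_\sigma\lesssim\sqrt\delta$, on which $\phi\gtrsim\tfrac12\log(1/\delta)$ --- is what converts ``average failure $\to0$'' into ``average free-energy cost $\to\infty$.'' The remaining ingredients (positivity $\pr{\overline\epsilon}>0$ for $0<p<1$, the identity $\pr{\mathrm{fail}}=\sum_\sigma P(\sigma)f_\sigma$ for the optimal decoder, and uniqueness of the maximizer when $f_\sigma<1/2$) are routine.
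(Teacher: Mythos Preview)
Your argument is correct and, after the common first step $\Delta_\lambda(\epsilon)=\log\big(\pr{\overline\epsilon}/\pr{\overline{\epsilon+\lambda}}\big)$, takes a genuinely different route from the paper. The paper normalizes by the syndrome-sector total $\sum_{\lambda'\in H_1}\pr{\overline{\epsilon+\lambda'}}$, splits $\langle\Delta_\lambda\rangle$ into two pieces, bounds the first by $\log x\ge 1-1/x$ (yielding the finite constant $1-|H_1|$, which is where $|H_1|<\infty$ is actually used), and handles the second by Jensen together with a sandwich $\pr\suc\ge\sum_{\overline\epsilon}\pr{\overline\epsilon}\,\pr{\overline\epsilon}/\!\sum_{\lambda'}\pr{\overline{\epsilon+\lambda'}}\ge 2\pr\suc-1$ to show the relevant average vanishes. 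You instead identify each syndrome-sector contribution as a relative entropy of the posterior against its pushforward under the fixed-point-free involution $T_\lambda$, pair classes, retain only the dominant pair to get $D_\sigma\ge\phi(f_\sigma)$, and then use a Markov two-scale estimate to convert $\mathbb{E}[f_\sigma]\to0$ into divergence. Your approach gives an explicit quantitative rate $\langle\Delta_\lambda\rangle\gtrsim\tfrac12\log(1/\pr{\mathrm{fail}})$ and, contrary to your parenthetical remark, does not really need $|H_1|<\infty$ (the per-pair bound and the Markov step work just as well for countably many classes); the paper's argument is a bit shorter and uses only Jensen and the elementary $\log$ inequality, but its constant $1-|H_1|$ genuinely requires finiteness.
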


\begin{proof}
Let $\overline\epsilon = \{ \epsilon + \partial_2\omega | \omega \in C_2\}$ denote the equivalence class of errors for $\epsilon\in C_1$ and $\mathcal{E} = \{ \overline\epsilon,\ldots \}$ be the set of all equivalence classes.
We define a representative of the most probable equivalence class of errors consistent with the syndrome $\sigma \in C_0$ to be
\begin{equation}
\rho(\sigma) = \arg\max_{\substack{\rho \in C_1\\ \partial_1 \rho = \sigma}} \pr{\overline\rho} .
\end{equation}
The conditional probability of successful decoding using the optimal (maximum likelihood) decoder is given by
\begin{equation}
\pr{\suc |\epsilon} = 
\begin{cases}
1 \mathrm{\ if \ } \epsilon\in\overline{\rho(\partial_1 \epsilon)},\\
0 \mathrm{\ otherwise.}
\end{cases}
\end{equation}
Thus, we have
\begin{equation}
\pr{\suc} = \sum_{\epsilon\in C_1} \pr{\epsilon} \pr{\suc | \epsilon} = \sum_{\sigma\in \im\partial_1} \pr{\overline{\rho(\sigma)}}.
\end{equation}
By rewriting the sum over all equivalence classes of errors $\overline\epsilon\in\mathcal{E}$ as the sum over all possible syndromes $\sigma\in\im \partial_1$ and different representatives $\lambda'\in H_1$ of the homology group we arrive at
\begin{equation}
1 = \sum_{\overline\epsilon\in\mathcal{E}} \pr{\overline \epsilon} = \sum_{\sigma\in\im\partial_1}\sum_{\lambda' \in H_1} \pr{\overline{\rho(\sigma) + \lambda'}}
= \pr{\suc} + \sum_{\sigma\in\im\partial_1}\sum_{0\neq\lambda' \in H_1} \pr{\overline{\rho(\sigma) + \lambda'}}.
\end{equation}

We want to show two inequalities
\begin{equation}
\pr{\suc} \geq \sum_{\overline\epsilon\in\mathcal{E}} \pr{\overline\epsilon} \frac{\pr{\overline{\epsilon}}}{\sum_{\lambda'\in H_1} \pr{\overline{\epsilon+\lambda'}}} \geq 2\pr{\suc} - 1.
\label{eq_prsuc}
\end{equation}
In order to show the first inequality~(\ref{eq_prsuc}) note that
\begin{eqnarray}
\pr{\suc} &=& \sum_{\sigma\in\im\partial_1} \sum_{\lambda''\in H_1}\pr{\overline{\rho(\sigma)+\lambda''}}
\frac{\pr{\overline{\rho(\sigma)}}}{\sum_{\lambda'\in H_1}\pr{\overline{\rho(\sigma)+\lambda'}}}\\
&\geq& \sum_{\sigma\in\im\partial_1} \sum_{\lambda''\in H_1}\pr{\overline{\rho(\sigma)+\lambda''}}
\frac{\pr{\overline{\rho(\sigma)+\lambda''}}}{\sum_{\lambda'\in H_1}\pr{\overline{\rho(\sigma)+\lambda'}}}
= \sum_{\overline\epsilon\in\mathcal{E}} \pr{\overline\epsilon} \frac{\pr{\overline\epsilon}}{\sum_{\lambda'\in H_1}\pr{\overline{\epsilon+\lambda'}}},
\end{eqnarray}
where we use $\pr{\overline{\rho(\sigma)}} \geq \pr{\overline{\rho(\sigma)+\lambda''}}$ for all $\sigma\in\im\partial_1$ and $\lambda''\in H_1$.
The second inequality~(\ref{eq_prsuc}) follows from 
\begin{eqnarray}
\pr{\suc} &=& \sum_{\sigma\in\im\partial_1} \pr{\overline{\rho(\sigma}})
= \sum_{\sigma\in\im\partial_1} \pr{\overline{\rho(\sigma)}} \frac{\pr{\overline{\rho(\sigma)}}}{\sum_{\lambda'\in H_1} \pr{\overline{\rho(\sigma)+\lambda'}}}
+ \sum_{\sigma\in\im\partial_1} \pr{\overline{\rho(\sigma)}} \frac{\sum_{0\neq\lambda'\in H_1} \pr{\overline{\rho(\sigma)+\lambda'}}}{\sum_{\lambda'\in H_1} \pr{\overline{\rho(\sigma)+\lambda'}}}\quad\quad\\
&\leq& \sum_{\sigma\in\im\partial_1}\sum_{\lambda''\in H_1} \pr{\overline{\rho(\sigma)+\lambda''}}
\frac{\pr{\overline{\rho(\sigma)+\lambda''}}}{\sum_{\lambda'\in H_1} \pr{\overline{\rho(\sigma)+\lambda'}}}
+ \sum_{\sigma\in\im\partial_1} \sum_{0\neq\lambda'\in H_1} \pr{\overline{\rho(\sigma)+\lambda'}}\\
&=& \sum_{\overline\epsilon\in\mathcal{E}} \pr{\overline\epsilon} \frac{\pr{\overline{\epsilon}}}{\sum_{\lambda'\in H_1} \pr{\overline{\epsilon+\lambda'}}} + (1 - \pr{\suc}).
\end{eqnarray}

If $\pr{\suc} \rightarrow 1$, then from inequalities~(\ref{eq_prsuc}) we infer that
\begin{equation}
\sum_{\overline\epsilon\in\mathcal{E}} \pr{\overline\epsilon} \frac{\pr{\overline{\epsilon}}}{\sum_{\lambda'\in H_1} \pr{\overline{\epsilon+\lambda'}}} \rightarrow 1,
\end{equation}
and thus for $\lambda \in\ker\partial_1\setminus\im\partial_2$ we have
\begin{equation}
\sum_{\overline\epsilon\in\mathcal{E}} \pr{\overline\epsilon} \frac{\pr{\overline{\epsilon+\lambda}}}{\sum_{\lambda'\in H_1} \pr{\overline{\epsilon+\lambda'}}} \rightarrow 0.
\end{equation}
In the last step we used the following inequalities
\begin{equation}
0\leq \sum_{\overline\epsilon\in\mathcal{E}} \pr{\overline\epsilon} \frac{\pr{\overline{\epsilon+\lambda}}}{\sum_{\lambda'\in H_1} \pr{\overline{\epsilon+\lambda'}}} 
\leq \sum_{\overline\epsilon\in\mathcal{E}} \pr{\overline\epsilon} \frac{\sum_{0\neq \lambda'\in H_1}\pr{\overline{\epsilon+\lambda'}}}{\sum_{\lambda'\in H_1} \pr{\overline{\epsilon+\lambda'}}} 
=1 - \sum_{\overline\epsilon\in\mathcal{E}} \pr{\overline\epsilon} \frac{\pr{\overline{\epsilon}}}{\sum_{\lambda'\in H_1} \pr{\overline{\epsilon+\lambda'}}}.
\end{equation}

We rewrite $\langle \Delta_{\lambda} \rangle$ in the following way
\begin{eqnarray}
\langle \Delta_{\lambda} \rangle &=& \sum_{\epsilon\in C_1} \pr{\epsilon} \Delta_\lambda(\epsilon)
= \sum_{\overline\epsilon\in\mathcal{E}} \pr{\overline\epsilon} \log \frac{\pr{\bar\epsilon}}{\pr{\overline{\epsilon+\lambda}}}\\
&=& \sum_{\overline\epsilon\in\mathcal{E}} \pr{\overline\epsilon} \log \frac{\pr{\bar\epsilon}}{\sum_{\lambda'\in H_1}\pr{\overline{\epsilon+\lambda'}}}
- \sum_{\overline\epsilon\in\mathcal{E}} \pr{\overline\epsilon} \log \frac{\pr{\overline{\epsilon+\lambda}}}{\sum_{\lambda'\in H_1}\pr{\overline{\epsilon+\lambda'}}}.
\end{eqnarray}
Using the inequality $\log x \geq 1 - \frac{1}{x}$ to lower-bound the first term and Jensen inequality for the second term we obtain
\begin{equation}
\langle \Delta_\lambda \rangle \geq 
(1 - | H_1|) - \log\sum_{\overline\epsilon\in\mathcal{E}} \pr{\overline\epsilon} \frac{\pr{\overline{\epsilon+\lambda}}}{\sum_{\lambda'\in H_1} \pr{\overline{\epsilon+\lambda'}}} \rightarrow\infty.
\end{equation}
\end{proof}

%%%%%%%%%%%%%%%%%%%%%%%%%%%%%%%%%%%%%%%%%%%% 
\section{Finding phase transitions}
%%%%%%%%%%%%%%%%%%%%%%%%%%%%%%%%%%%%%%%%%%%%

In order to map the disorder-temperature phase diagrams of the $4$- and $6$-body RCIM in Fig.~\ref{fig_phasediag} we need to reliably identify phase transitions. Here we describe in detail how we achieve that by analyzing specific heat, the spin-spin correlation function and the Wilson loop operator. We aslo present additional 
results for the $4$- and $6$-body RCIM in Fig.~\ref{fig_extranumerics}.

\begin{figure*}[h!]
\includegraphics[width=\textwidth]{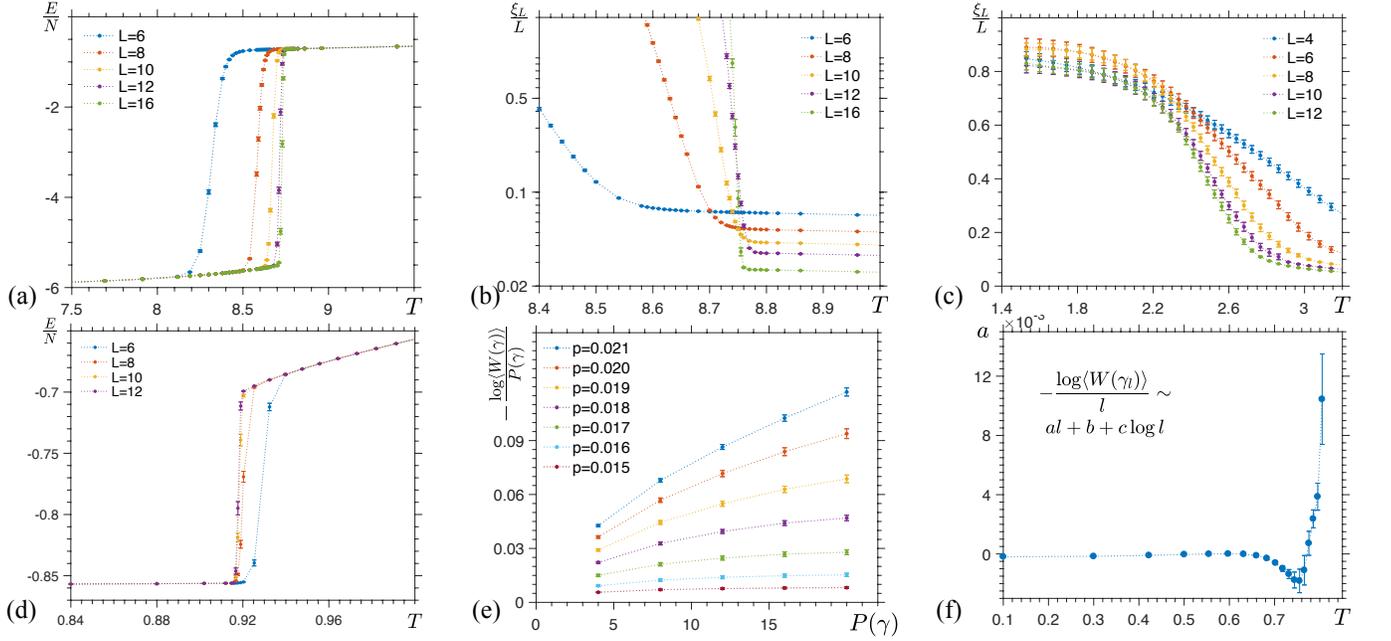}
\caption{
Additional details about the $4$-body (a)-(c) and $6$-body (d)-(f) RCIM.
The discontinuity in energy per spin $E/N$ in (a) and (d) suggests first-order phase transitions for both models for $p=0$.
(b) For $p=0$ the normalized correlation length $\xi_L/L$ does not seem to be described well by the scaling ansatz in Eq.~(\ref{eq_correlation}) possibly due to a transition being first-order.
(c) For the disorder value $p=0.276$ close to the critical point on the Nishimori line $p_N = p^{(2)}_{\mathrm{3DCC}}$ detecting a phase transition and estimating its critical temperature becomes difficult.
(e) We check if the Wilson loop operator $W(\gamma)$ satisfies the perimeter law by plotting $-\log \langle W(\gamma)\rangle /P(\gamma)$ as a function of perimeter $P(\gamma)$ of the square loop $\gamma$ for different disorder values $p$ and fixed temperature $T=0.42$.
We see a change of scaling as the system undergoes a phase transition at $p=0.019(1)$.
(f) We find a fit $-\log \langle W(\gamma_l)\rangle/l \sim al+b+c\log l$ to the data for $p=0.018$ in Fig.~\ref{fig_numerics}(c) and plot the fit coefficient $a$ as a function of temperature $T$.
We identify the critical temperature $T_c = 0.75(3)$ of a transition as a location where $a=0$.
}
\label{fig_extranumerics} 
\end{figure*}

\subsection{Specific heat}

For a second-order phase transition, the specific heat $c(T)$ as a function of temperature $T$ is expected to have a discontinuity near a phase transition at temperature $T_c$ in the limit of infinite system size $L \rightarrow \infty$.
However, for a system of finite linear size $L$, the peak of the specific heat $c_L(T)$ appears at temperature $T_c(L) = \arg\max_T c_L(T)$ shifted from that in the infinite system by an amount
\begin{equation}
\left|\frac{T_L-T_c}{T_c}\right| \propto L^{-1/\nu},
\end{equation}
where $\nu$ is the correlation length critical exponent \cite{Goldenfeld1992}.
A similar scaling behavior has been established for first-order phase transitions \cite{Challa1986,Binder1987,Lee1991,Borgs1992}.
Thus, we find the critical temperature $T_c$ by fitting a function
\begin{equation}
T_c(L) \sim a L^{-b} + T_c
\end{equation}
to the position of the specific heat peaks for different system sizes and evaluating $T_c(L = \infty)$.

\subsection{Correlation function}

One might not be able to identify a phase transition of higher order by looking at the specific heat.
Rather, one needs to analyze the behavior of e.g. the order parameter correlation length $\xi$.
In particular, for the system of finite size $L$ and with fixed disorder strength $p$ we define the two-point finite-size correlation length $\xi_L$ as a function of temperature $T$
\begin{equation}
\xi_L(T)= \frac{1}{2\sin (k_0/2)}\sqrt{\frac{\langle\chi(\vec{0})\rangle}{\langle\chi(\vec{k}_0)\rangle} - 1},
\end{equation}
where $\langle\chi(\vec{k})\rangle = \sum_{\epsilon\subset\face{3}{\mathcal{L}}} \pr{\epsilon}\chi(\vec{k})$, $\vec{k}$ is the wavevector and $\vec{k}_0 = (2\pi/L, 0, 0)$.
In above, we use the thermal expectation value of the wavevector-dependent sublattice magnetic susceptibility
\begin{equation}
\chi(\vec{k}) = \sum_{\{s_v\}} \frac{1}{N} \left(\sum_{u\in U} s_u e^{i \vec{k}\cdot\vec{r_u}} \right)^2 
\frac{e^{-\beta H^X_{\epsilon}(\{s_v\})}}{Z_\epsilon(\beta)}.
\end{equation}
where $\vec{r_u}$ denotes the position of the vertex spin $s_u$ in a sublattice $U\subset \face{0}{\mathcal{L}}$ of single-color vertices.
Near a phase transition at temperature $T_c$, the normalized correlation length is expected to scale as
\begin{equation}
\frac{\xi_L (T)}{L} \sim f(L^{1/\nu} (T-T_c)),
\end{equation}
where $f$ is a dimensionless scaling function and $\nu$ is the correlation length critical exponent.
We can estimate $T_c$ by plotting $\xi_L(T) /L$ as a function of temperature $T$ for different system sizes $L$ and finding their crossing point.
If there is no crossing, then we conclude that there is no phase transition.

\subsection{Wilson loop operator}

When the system under consideration has a local (gauge) symmetry, one cannot use a local order parameter to detect a phase transition.
Rather, one needs to consider gauge-invariant quantities, such as the Wilson loop operator $W(\gamma)$ in Eq.~(\ref{eq_wilson}).
Suppose $\gamma$ is a square loop. 
We denote by $P(\gamma)$ and $A(\gamma)$ the perimeter of $\gamma$ and the minimal area enclosed by $\gamma$, respectively. 
The scaling of the averaged Wilson loop operator $\langle W(\gamma) \rangle$ in the limit of large loops changes between the ordered (Higgs) and disordered (confinement) phases. 
Namely, 
\begin{itemize}
\item in the disordered phase: $\langle W(\gamma)\rangle \sim \exp(-\const\cdot A(\gamma))$,
\item in the ordered phase: $\langle W(\gamma)\rangle \sim \exp(-\const\cdot P(\gamma))$.
\end{itemize}

We consider a system of finite size $L$ and denote by $\gamma_l$ a square loop of linear size $l\leq L/2$.
Since $A(\gamma_l) \propto l^2$ and $P(\gamma_l) \propto l$, then $\log \langle W(\gamma_l) \rangle$ should scale either quadratically or linearly in $l$, depending on the phase of the system.
Due to finite-size effects, there are some corrections to the area and perimeter scaling.
In particular, we numerically find that 
\begin{equation}
- \frac{\log \langle W(\gamma_l)\rangle}{l} \sim a l +b +c\log l,
\label{eq_fitlog}
\end{equation}
where $a,b,c$ are some constants.
We identify the disordered phase as the region where the fitting parameter $a$ is positive, $a>0$.

%%%%%%%%%%%%%%%%%%%%%%%%%%%%%%%%%%%%%%%%%%%% 
\section{Classical Ising gauge theory}
%%%%%%%%%%%%%%%%%%%%%%%%%%%%%%%%%%%%%%%%%%%%

As an example of using specific heat and the scaling of the Wilson loop operator to identify a phase transition we study a known model, the three-dimensional random plaquette Ising model (RPIM); see Fig.~\ref{fig_numerics_IGT}.
The RPIM is a generalization of the $\mathbb{Z}_2$ Ising gauge theory, which is relevant for studying the optimal error correction threshold for 1D string-like operators in the 3D toric code \cite{Dennis2002}.
The RPIM is a statistical-mechanical model with classical spins $s_e = \pm 1$ placed on edges $e\in\face{1}{\mathcal{C}}$ of the cubic lattice $\mathcal{C}$ and disorder $\epsilon\subset\face{2}{\mathcal{C}}$.
The Hamiltonian describing the RPIM
\begin{equation}
H^{\mathrm{RPIM}}_\epsilon(\{s_e\}) = -\sum_{f\in\face{2}{\mathcal{C}}} (-1)^{[\epsilon]_f}\ \raisebox{-9pt}{\includegraphics[height = 24pt]{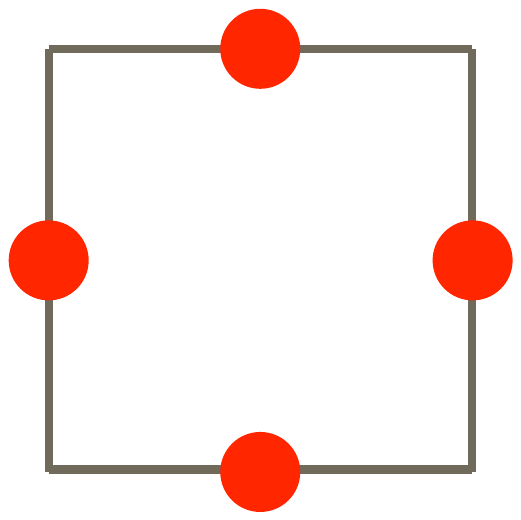}}
\end{equation}
contains $4$-body terms, which are products of four edge spins around every square face $f\in\face{2}{\mathcal{C}}$ of the lattice $\mathcal{C}$.
We set $[\epsilon]_f = 1$ if $f\in\epsilon$, otherwise $[\epsilon]_f = 0$. 
We observe that $H^{\mathrm{RPIM}}_\epsilon(\{s_e\})$ has a local $\mathbb{Z}_2$ symmetry, generated by flips of spins on all edges incident on any vertex $v\in \face{0}{\mathcal{C}}$.
The Wilson loop operator $W(\gamma_l)$ is a gauge-invariant quantity, where $\gamma_l$ is a square loop of linear size $l$.
The disorder-temperature phase diagram of the 3D RPIM is shown in Fig.~\ref{fig_phasediag_IGT}.

\begin{figure*}[h!]
\includegraphics[width=\textwidth]{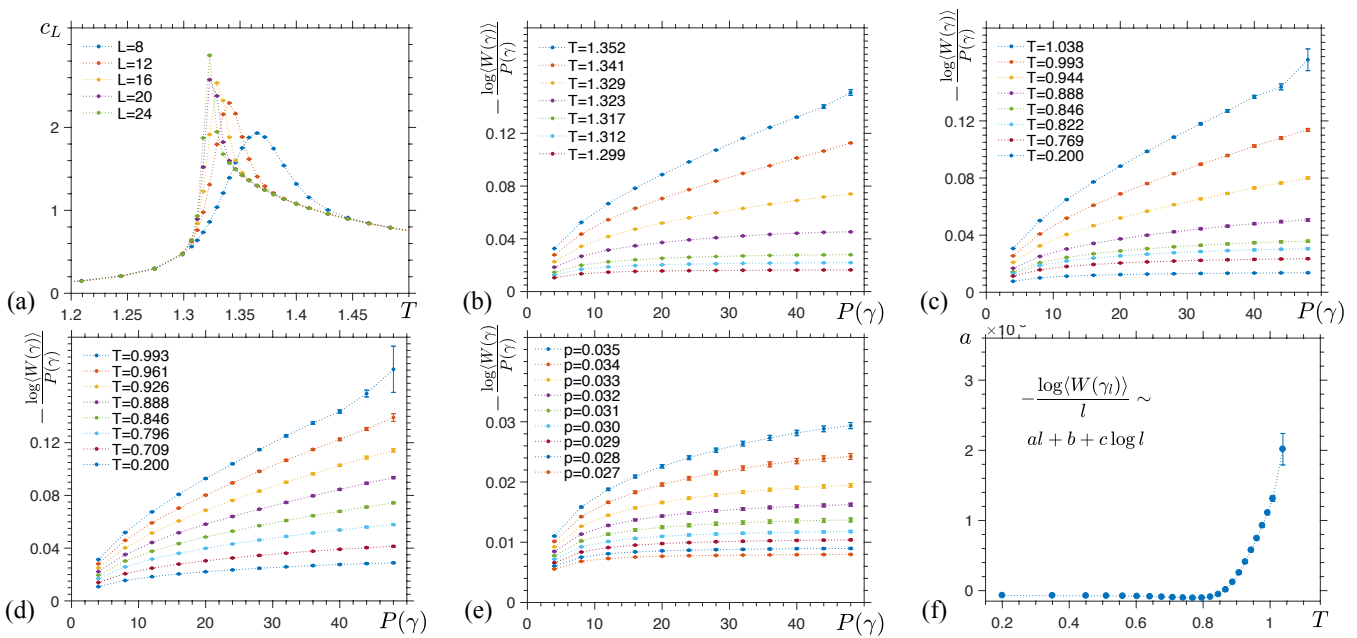}
\caption{
Results for the 3D RPIM. (a) For $p=0$ we can estimate the critical temperature $T_c = 1.316(4)$ of a phase transition by finding the peak positions of specific heat $c_L$ for different system sizes $L$ and exploiting finite-size scaling. 
In (b)-(d) we check for $p=0$, $p=0.031$ and $p=0.035$ whether the Wilson loop operator $W(\gamma)$ satisfies the perimeter law by plotting $-\log \langle W(\gamma)\rangle /P(\gamma)$ as a function of perimeter $P(\gamma)$ of the square loop $\gamma$ for different temperatures $T$.
(e) For fixed temperature $T = 0.45$ we analyze scaling of $-\log \langle W(\gamma)\rangle /P(\gamma)$ for different disorder values $p$.
(f) We find a fit $-\log \langle W(\gamma_l)\rangle/l \sim al+b+c\log l$ to the data in (c) and plot the fit coefficient $a$ as a function of temperature $T$.
We identify the critical temperature $T_c = 0.84(3)$ of a phase transition in (c) as a location where $a=0$.
In (b),(c) and (e) we see a change of scaling as the system undergoes a phase transition at $T_c = 1.317(6)$, $T_c = 0.84(3)$ and $p_c = 0.032(1)$, respectively. In (d) there is no indication of a transition.
}
\label{fig_numerics_IGT} 
\end{figure*}

\begin{figure}[ht!]
\includegraphics[width=0.4\columnwidth]{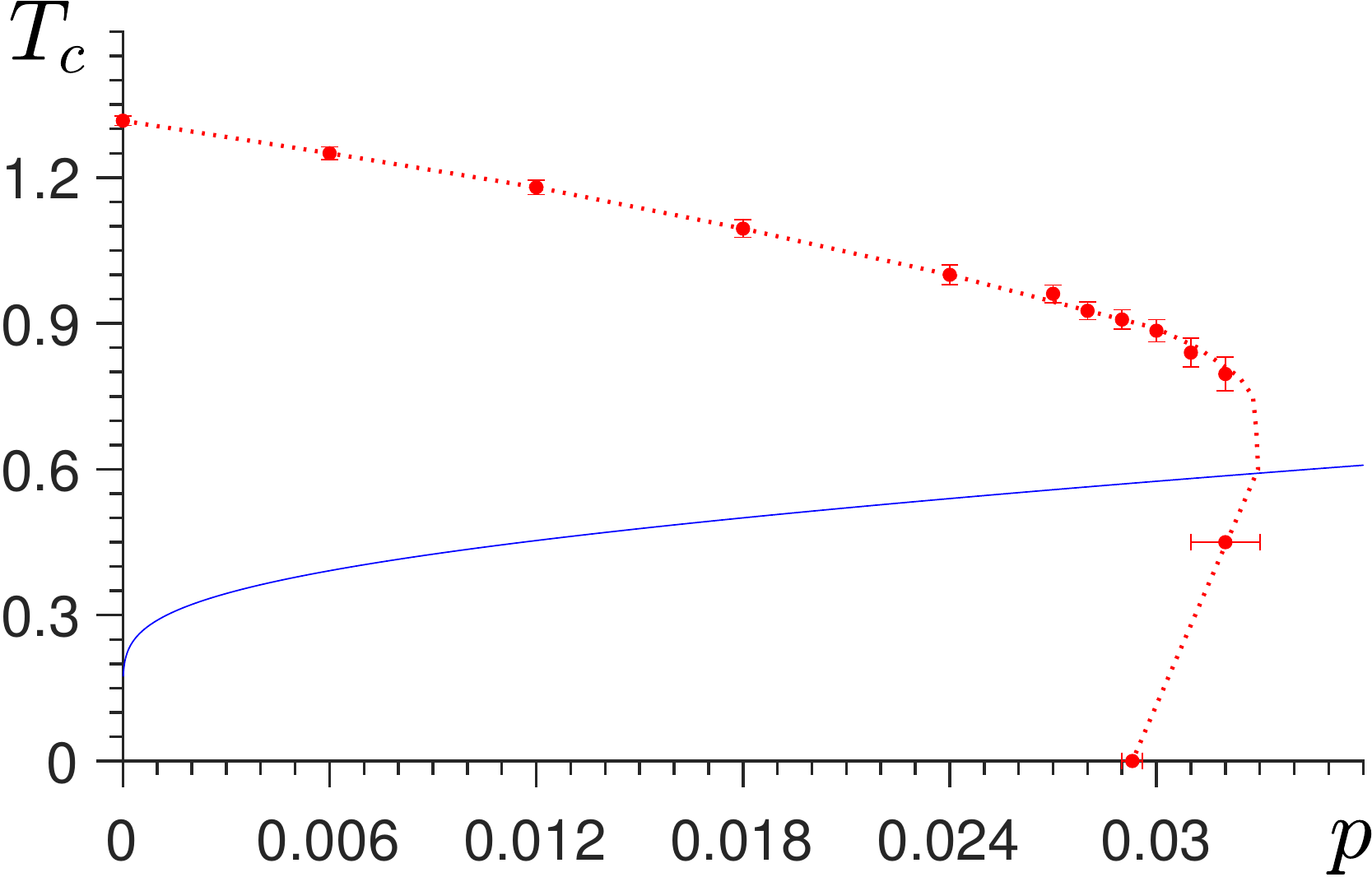}
\caption{
The disorder-temperature $(p,T)$-phase diagram of the 3D random plaquette Ising model on the cubic lattice. 
The intersection of the Nishimori line (blue) with the anticipated phase boundary (red dotted line) gives the 3D toric code threshold $p^{(1)}_{\text{3DTC}} \simeq 3.3\%$ for optimal error correction associated with 1D string-like logical operators (and point-like excitations).
Note that the location of a phase transition for $T=0$ was found in \cite{Wang2002}.}
\label{fig_phasediag_IGT} 
\end{figure}

\newpage
%%%%%%%%%%%%%%%%%%%%%%%%%%%%%%%%%%%%%%%%%%%% 
\section{Numerical simulation details}
%%%%%%%%%%%%%%%%%%%%%%%%%%%%%%%%%%%%%%%%%%%%

The numerical complexity of simulating the statistical-mechanical models, such as the $4$- and $6$-body RCIM and the RPIM, increases with the disorder strength $p$, which is reminiscent of a spin glass behavior.
To speed up simulations we use the parallel tempering technique.
The parallel tempering technique requires simultaneous simulation of multiple copies $k=1,\ldots,n$ of the system with the same disorder $\epsilon$ but different spin configurations $\{s_i\}_k$ and temperatures $T_1<\ldots <T_n$.
After performing single-spin Metropolis updates for all spins in every copy of the system, swaps of spin configurations $\{s_i\}_k \leftrightarrow \{s_i\}_{k+1}$ of copies at neighboring temperatures $T_k$ and $T_{k+1}$ are allowed with probability
\begin{equation}
\pr{k \leftrightarrow k+1} = \exp\left( (E_k-E_{k+1})\left(\frac{1}{T_k} - \frac{1}{T_{k+1}}\right) \right),
\end{equation}
where $E_k$ and $E_{k+1}$ denote energies of spin configurations $\{s_i\}_k$ and $\{s_i\}_{k+1}$.
We choose temperatures $T_1<\ldots <T_n$ is such a way that the exchange rate $\{s_i\}_k \leftrightarrow \{s_i\}_{k+1}$ is approximately flat; for more in-depth discussions see e.g.~\cite{Katzgraber2006}.
Equilibration of the system is tested by a logarithmic binning of data.
Numerical simulation details for the $4$-body RCIM, the $6$-body RCIM and the RPIM are presented in Table~\ref{tab_params}.

To estimate statistical error bars of quantities analyzed in the simulation we use the bootstrap technique.
The main idea behind the bootstrap technique is to repeat sampling from the existing data set $D = \{d_1,\ldots, d_N\}$ and evaluating a quantity of interest $q = q(D)$.
In particular, for $i=1,\ldots,n$ we perform the following steps
\begin{enumerate}
\item from the data set $D$ randomly choose $N$ data points $d_{i(j)}$, where $i(j)\in \{1,\ldots, N\}$,
\item evaluate the quantity $q_i = q(D_i)$ from the data set $D_i = \{ d_{i(1)},\ldots,d_{i(N)} \}$.
\end{enumerate}
Note that in step 1 we allow to choose the same data point multiple times.
The relevant quantity $q$ is estimated to be
\begin{equation}
q =  \bar q \pm \sqrt{\sum_{i=1}^n\frac{(\bar q - q_i)^2}{n-1}},
\end{equation}
where $\bar q = \frac{1}{n}\sum_{i=1}^{n}q_i$.

\begin{table}[hp!]
\centering
\begin{tabular*}
{0.5\columnwidth}{@{\extracolsep{\fill} } c c c c c c c}
\hline\hline			
$p$ & $L_{\mathrm{max}}$ & $N_\epsilon$ & $\tau$ & $N_T$& $T_{\mathrm{min}}$ & $T_{\mathrm{max}}$\\
\hline
%Apr05
0.000&	16	&	500&		20&	55&	2.40&	12.80\\
0.050&	16	&	500&		20&	42&	2.30&	11.40\\
0.100&	16	&	500&		20&	41&	2.20&	10.15\\
0.150&	16	&	500&		20&	42&	2.10&	8.42\\
0.200&	16	&	500&		20&	41&  2.00&	6.80\\
0.250&	16	&	500&		20&	42&	1.90&	4.97\\
%Feb15
0.265&	16	&	500&		20&	34&	1.80&	3.53\\
0.270&	16	&	500&		20&	34&	1.60&	3.32\\
0.272&	12	&	500&		20&	34&	1.60&	3.30\\
0.274&	12	&	500&		20&	34&	1.53&	3.21\\
0.276&	12	&	500&		20&	34&	1.53&	3.21\\
0.280&	12	&	500&		20&	34&	1.33&	3.18\\
\hline\hline
%Apr04
0.000&	12		&	250&		20&	47&	0.20&	1.28\\
0.003&	12		&	250&		20&	44&	0.20&	1.25\\
0.006&	12		&	250&		20&	42&	0.20&	1.22\\
0.009&	12		&	250&		20&	39&	0.20&	1.17\\
0.012&	12		&	250&		20&	38&	0.20&	1.14\\
%Nov16
0.015&	10		&	250&		21&	48&	0.10&	1.35\\
0.016&	10		&	250&		21&	48&	0.10&	1.35\\
0.017&	10		&	250&		21&	48&	0.10&	1.35\\
0.018&	10		&	250&		21&	48&	0.10&	1.35\\
0.019&	10		&	250&		21&	48&	0.10&	1.35\\
0.020&	10		&	250&		21&	48&	0.10&	1.35\\
0.021&	10		&	250&		21&	48&	0.10&	1.35\\
\hline\hline
%Aug08	
0.000&	24	&	500&		19&	51&	0.40&	2.08\\
%Mar17
0.006&	24	&	500&		19&	43&	0.40&	1.95\\
0.012&	24	&	500&		19&	41&	0.40&	1.77\\
0.018&	24	&	500&		19&	43&	0.35&	1.64\\
0.024&	24	&	500&		19&	42&	0.30&	1.49\\
%Apr03
0.027&	24	&	250&		19&	43&	0.20&	1.28\\
0.028&	24	&	250&		19&	43&	0.20&	1.28\\
0.029&	24	&	250&		19&	43&	0.20&	1.28\\
0.030&	24	&	250&		19&	43&	0.20&	1.28\\
0.031&	24	&	250&		19&	43&	0.20&	1.28\\
0.032&	24	&	250&		19&	43&	0.20&	1.28\\
0.023&	24	&	250&		19&	43&	0.20&	1.28\\
0.034&	24	&	250&		19&	43&	0.20&	1.28\\
0.035&	24	&	250&		19&	43&	0.20&	1.28\\
\hline\hline
\end{tabular*}
\caption{
Numerical simulation parameters for: the $4$-body RCIM (top), the $6$-body RCIM (middle), and the IGT (bottom).
$L_{\mathrm{max}}$ and $N_\epsilon$ denote the linear size of the biggest simulated system and the number of randomly chosen disorder samples.
$N_T$ denotes the number of temperatures in the range $[T_{\mathrm{min}},T_{\mathrm{max}}]$ chosen in a way that the exchange rate of spin configurations is approximately constant.
$2^\tau$ is the number of equilibration steps, where one equilibration step consists of an update of every spin in all $N_T$ copies of the system followed by swaps $\{s_i\}_k \leftrightarrow \{s_i\}_{k+1}$ of spin configurations. }
\label{tab_params}
\end{table}

\clearpage
%%%%%%%%%%%%%%%%%%%%%%%%%%%%%%%%%%%%%%%%%%%% 
% BIBLIOGRAPHY
\bibliography{biblio}
%%%%%%%%%%%%%%%%%%%%%%%%%%%%%%%%%%%%%%%%%%%% 

\end{document}